\newtheorem{lemma}{Lemma}
\newtheorem{theorem}{Theorem}
\DeclareMathOperator{\Tr}{\mathrm{Tr}}%trace2
\DeclareMathOperator{\probability}{\mathrm{Pr}}%trace2
\DeclareMathOperator{\tracedistance}{\mathcal{D}}
\DeclareMathOperator{\ee}{\mathrm{e}}%Euler e
\DeclareMathOperator{\iu}{\mathrm{i}}%imaginary unit
\DeclareMathOperator{\hiH}{\mathcal{H}}%Hilbert spaces
\DeclareMathOperator{\haH}{\mathscr{H}}%Hamiltonians
\newcommand{\bra}[1]{\langle #1|}
\newcommand{\ket}[1]{|#1\rangle}
\newcommand{\ketbra}[2]{| #1 \rangle \langle #2 |}
\newcommand{\expect}[1]{\langle #1\rangle}
\begin{document}
\title{Einselection without pointer states}
\author{Christian Gogolin\footnote{publications@cgogolin.de}}
\affiliation{Fakult{\"{a}}t f\"{u}r Physik und Astronomie, Universit\"{a}t W\"{u}rzburg, Am Hubland, 97074 W\"{u}rzburg, Germany}

\begin{abstract}
  We investigate decoherence and equilibration in the experimentally relevant situation of weak coupling to an environment.
  We consider small subsystems of large, closed quantum systems that evolve according to the von~Neumann equation.
  Without approximations and without making any special assumptions on the form of the interaction we prove that, for almost all initial states and almost all times, the off-diagonal elements of the density matrix of the subsystem in the eigenbasis of its local Hamiltonian must be small, whenever the energy difference of the corresponding eigenstates is larger than the interaction energy.
  This proves that decoherence with respect to the local energy eigenbasis is a natural property of weakly interacting quantum systems.
\end{abstract}

\pacs{05.30.-d, 03.65.-w, 03.65.Yz}
% Explanation of PACS numbers:
% 03.65.-w      Quantum mechanics
% 05.30.-d      Quantum statistical mechanics
% 03.65.Yz      Decoherence; open systems; quantum statistical methods
\keywords{open quantum systems, equilibration, non-Markovian dynamics, einselection, pointer states}

\maketitle

\section{Introduction}
\label{sec:introduction}
Quantum Mechanics claims to be a fundamental theory.
As such, it should be able to provide us with a microscopic explanation for all phenomena we observe in macroscopic systems, including irreversible processes such as thermalization.
But its unitary time development seems to be incompatible with irreversibility, leading to an apparent contradiction between Quantum Mechanics and Thermodynamics \footnote{Note that this is not a genuine problem of Quantum Mechanics. Classical Mechanics is time reversal invariant as well.}.

To explain irreversible processes and to overcome the problem of the emergence of classicality many authors have suggested to modify Quantum Theory.
For example, by adding non-linear terms to the von~Neumann equation, or by postulating a periodic spontaneous collapse of the wave function \cite{Zeh96,Bassi03}.
Others have considered Markovian, nonunitary, time evolution \cite{Zeh96,Breuer02} which can be thought of as resulting from an interaction with a memoryless bath and it has been shown that system bath models that evolve under a special type of Hamiltonian tend to evolve into states that are classical superpositions of so called \emph{pointer states} \cite{RevModPhys.75.715} --- a phenomenon called \emph{einselection}.
These approaches, which are subsumed under the term \emph{decoherence theory}, are able to reproduce many of the features of dissipative systems and are undoubtedly very valuable for applications.

But, in face of the enormous success of standard Quantum Mechanics in explaining microscopic phenomena and the existence of macroscopic quantum systems on the one hand and the broad applicability of Statistical Mechanics and Thermodynamics on the other, we feel that neither a modification of Quantum Theory, nor considerations restricted to special Hamiltonians can provide a satisfactory explanation for the classical, statistical, and thermodynamic behavior of the macroscopic world.
Recently there has been remarkable progress in explaining macroscopic, seemingly irreversible behavior from standard Quantum Mechanics.
It has been shown that it is possible to explain the phenomenon of equilibration and irreversibility \cite{Reimann08,Linden09,0907.1267v1} and to justify the applicability of the canonical and microcanonical ensemble \cite{Popescu06} without added randomness (i.e., without assuming the existence of already equilibrated and thermalized baths) and ensemble averages, from nothing but pure Quantum Mechanics and the randomness due to entanglement with the environment (see also \cite{Gogolin10} and the references therein). 

We make use of the results obtained in these papers and connect this approach with the research on decoherence.
We consider the case of decoherence due to weak interaction with an environment.
A weak coupling to an environment exists in practically all situations.
This case is thus of  particular interest for developing a better understating of the foundations of Statistical Mechanics and Thermodynamics, for applications in quantum information processing and quantum computing and for experiments on environment-assisted entanglement creation (see \cite{Benatti06} and the references therein).
Our main result is that decoherence with respect to the local energy eigenbasis is a natural property of weakly coupled systems.

\section{Setup and notation}
\label{sec:setupanddefinitions}
We consider arbitrary quantum systems that can be described using a Hilbert space $\hiH$ of finite dimension $d$ and that can be divided into two parts, which we will call the bath $B$ and the subsystem $S$.
For infinite dimensional systems it is often possible to find an effective description in a finite dimensional Hilbert space by introducing a high energy cut-off.
Moreover, it was demonstrated in \cite{Devi09} that many of the phenomena that can be rigorously proven in the finite dimensional case also occur in infinite dimensional systems.
We thus believe that the restriction to finite dimensions as mainly a technicality.

We use the terms bath and subsystem because in the end we will be interested in situations where the dimension $d_B$ of the Hilbert space of the bath $\hiH_B$ is much larger than the dimension $d_S$ of the Hilbert space $\hiH_S$ of the subsystem, besides that $S$ and $B$ are two completely arbitrary quantum systems.

We assume that the Hamiltonian $\haH$ of the joint system has \emph{non-degenerate energy gaps}.
This assumption already appears in the work of von~Neumann~\cite{vonneumann1929} and later in \cite{Linden09,Reimann08} and means that for any four energy eigenvalues $E_k,E_l,E_m,E_n$ equality of the gaps $E_k - E_l = E_m - E_n$ implies that either $k=l$ and $m=n$ or $k=m$ and $l=n$.
It shall be emphasized that this is an extremely weak assumption and with some additional effort, it can be replaced by an even weaker one that allows degeneracies of the energy levels \cite{0907.1267v1}.
Every Hamiltonian becomes non-degenerate by adding an arbitrary small random perturbation; therefore the Hamiltonians of macroscopic systems can be expected to satisfy this constraint.
The physical implication of the above assumption is that the Hamiltonian is \emph{fully interactive} in the sense that there exists no partition of the system into two non interacting subsystems.
For fully interactive systems our results are robust against the existence of some degeneracies in the energy gaps.
How non-degenerate the energy spectrum is influences the equilibration and decoherence times.

We use $\rho$ for density matrices of possibly mixed states and $\psi$ if the state is pure.
All states are assumed to be normalized $\Tr[\rho] = 1$.
Their reduced states on the bath and subsystem are denoted using superscript letters like in $\rho^B =\Tr_S[\rho]$ and $\rho^S =\Tr_B[\rho]$.
We write the trace norm of a density matrix $\rho$ as $\|\rho\|_1 = \Tr[\sqrt{\rho^\dagger\,\rho}] = \Tr|\rho|$,
and the trace distance as
\begin{equation}
  \label{eq:definitiontracedistance}
  \tracedistance(\rho,\sigma) = \frac{1}{2} \|\rho -\sigma \|_1 .
\end{equation}
We denote the operator norm of a Hermitian operator $A$ acting on some Hilbert space $\hiH$ by
\begin{equation}
  \|A\|_\infty = \max_{\psi \in \mathcal{P}_1(\hiH)} \Tr[A\,\psi] ,
\end{equation}
where $\mathcal{P}_1(\hiH)$ is the set of rank one projectors on $\hiH$.
We use the letter $\omega$ to denote the time average of time dependent states $\rho_t$
\begin{equation}
  \omega = \expect{\rho_t}_t = \lim_{\tau\to\infty} \frac{1}{\tau} \int_0^\tau \rho_t\,dt .
\end{equation}

\section{Equilibration}
\label{sec:equilibration}
In a time reversal invariant theory equilibration in the usual sense is impossible.
We therefore use an extended notion of equilibration and say that a system is in equilibrium when its density matrix stays close to some state for almost all times, and say that it evolves toward equilibrium if it approaches such a state and then stays close to it if started in a state far from equilibrium.

Recently, it has been shown that under the above assumptions and whenever the initial state has a high \emph{effective dimension} $d^{\mathrm{eff}}(\omega) = 1/\Tr[\omega^2],\ \omega = \expect{\rho_t}_t$ every small subsystem with $d_S \ll d^{\mathrm{eff}}(\omega)$ equilibrates in this extended sense:
\begin{theorem}[\cite{Linden09}]
  \label{theorem:distancefromtimeaverage}
  Consider any pure state $\psi_t$ evolving under a Hamiltonian with non-degenerate energy gaps.
  Then the average distance between $\rho^S_t = \Tr_B\psi_t$ and its time average $\omega^S = \expect{\rho^S_t}_t$ is bounded by
  \begin{equation}
    \expect{\tracedistance(\rho^S_t,\omega^S)}_t \leq \frac{1}{2} \sqrt{\frac{d_S}{d^\mathrm{eff}(\omega^B)}} \leq \frac{1}{2} \sqrt{\frac{d_S^2}{d^\mathrm{eff}(\omega)}}
  \end{equation}
\end{theorem}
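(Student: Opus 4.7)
My plan is a two-step reduction. First, for any Hermitian $X$ on $\hiH_S$, Cauchy--Schwarz on its eigenvalues gives $\|X\|_1 \leq \sqrt{d_S}\,\|X\|_2$, so
\begin{equation}
\tracedistance(\rho^S_t,\omega^S) \leq \tfrac12\sqrt{d_S}\,\|\rho^S_t - \omega^S\|_2 .
\end{equation}
Taking the time average and applying Jensen's inequality to $\sqrt{\cdot}$ converts the theorem into the claim $\expect{\|\rho^S_t - \omega^S\|_2^2}_t \leq 1/d^{\mathrm{eff}}(\omega^B)$, which is now what I have to prove.

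For this second moment I would expand $\rho_0 = \sum_{k,l}\rho_{kl}\ketbra{k}{l}$ in the $\haH$-eigenbasis. The non-degenerate gap assumption implies non-degenerate energies, so $\omega = \sum_k \rho_{kk}\ketbra{k}{k}$ and $\rho_t - \omega = \sum_{k\neq l} e^{-\iu(E_k-E_l)t}\rho_{kl}\ketbra{k}{l}$. Using the swap-trick identity $\|Y^S\|_2^2 = \Tr[(Y\otimes Y)(F_S\otimes \mathbbm{1}_{BB})]$, with $F_S$ the swap on $\hiH_S\otimes\hiH_S$, expand $\|\rho^S_t - \omega^S\|_2^2$ as a quadruple sum and time-average. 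The non-degenerate gap condition forces the phase $e^{-\iu((E_k-E_l)-(E_n-E_m))t}$ to vanish in the average unless $(k=l,n=m)$ or $(k=n,l=m)$, and the $k\neq l$ restriction excludes the first branch. A short bookkeeping computation then yields
\begin{equation}
\expect{\|\rho^S_t - \omega^S\|_2^2}_t = \sum_{k\neq l}|\rho_{kl}|^2\,\Tr\bigl[(\ketbra{k}{k})^B(\ketbra{l}{l})^B\bigr] .
\end{equation}

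Positivity of $\rho_0$ gives the standard matrix-element bound $|\rho_{kl}|^2 \leq \rho_{kk}\rho_{ll}$; dropping the $k\neq l$ restriction only enlarges the sum, which then collapses to $\Tr\bigl[(\sum_k \rho_{kk}(\ketbra{k}{k})^B)^2\bigr] = \Tr[(\omega^B)^2] = 1/d^{\mathrm{eff}}(\omega^B)$, establishing the first inequality of the theorem. The second inequality $d^{\mathrm{eff}}(\omega^B) \geq d^{\mathrm{eff}}(\omega)/d_S$ is equivalent to $\Tr[(\omega^B)^2] \leq d_S\,\Tr[\omega^2]$ and follows from Cauchy--Schwarz on the matrix-element representation $\langle a|\omega^B|b\rangle = \sum_s \langle sa|\omega|sb\rangle$.

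The main obstacle I anticipate is the swap-trick bookkeeping in the second step: correctly identifying $\Tr[(\ketbra{k}{l}\otimes\ketbra{l}{k})(F_S\otimes\mathbbm{1}_{BB})]$ with $\Tr[(\ketbra{k}{k})^B(\ketbra{l}{l})^B]$, so that the resulting sum is recognized as a trace of $(\omega^B)^2$ rather than something involving $\omega^S$ or the full $\omega$. Once this identity is in hand, everything else is straightforward norm manipulation together with a single invocation of positivity of $\rho_0$.
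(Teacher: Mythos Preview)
The paper does not prove this theorem at all: it is quoted verbatim as a result of Linden, Popescu, Short and Winter \cite{Linden09} and no argument is supplied. There is therefore no ``paper's own proof'' to compare against.

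Your proposal is correct and is essentially the original argument of \cite{Linden09}: reduce trace norm to Hilbert--Schmidt norm at the cost of $\sqrt{d_S}$, apply Jensen, expand in the $\haH$-eigenbasis, use the non-degenerate gap hypothesis to kill all cross terms in the time average, identify the surviving sum with $\Tr[(\omega^B)^2]$ via the swap identity, and finish with $\Tr[(\omega^B)^2]\le d_S\Tr[\omega^2]$. The one step you flag as delicate, $\Tr_S\bigl[\Tr_B(\ketbra{k}{l})\,\Tr_B(\ketbra{l}{k})\bigr]=\Tr_B\bigl[(\ketbra{k}{k})^B(\ketbra{l}{l})^B\bigr]$, is indeed the key bookkeeping identity and it holds exactly as you state; a direct coefficient expansion in a product basis verifies it. Since the theorem is stated for pure initial states you actually have $|\rho_{kl}|^2=\rho_{kk}\rho_{ll}$ with equality, so the positivity bound you invoke is not even lossy here.
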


The second important result of \cite{Linden09} is that the effective dimension $d^\mathrm{eff}(\omega)$, which is a measure for how many energy eigenstates contribute significantly to the initial state, is large for almost all pure states drawn according to the unitary in invariant Haar measure:
\begin{theorem}[\cite{Linden09}]
  \label{theorem:highaveragedeffectivedimensionisgeneric}
  (i) The average effective dimension $\expect{d^{\mathrm{eff}}(\omega)}_{\psi_0}$, where the average is computed over uniformly random pure initial states $\psi_0 \in \mathcal{P}_1(\hiH_R)$ chosen from a subspace $\hiH_R$ of dimension $d_R$, is such that
  \begin{equation}
    \expect{d^{\mathrm{eff}}(\omega)}_{\psi_0} \geq \frac{d_R}{2} .
  \end{equation}
  (ii) For a random pure initial state $\psi_0 \in \mathcal{P}_1(\hiH_R)$, the probability that $d^{\mathrm{eff}}(\omega)$ is smaller than $d_R/4$ is exponentially small, namely,
  \begin{equation}
    \probability\left\{d^{\mathrm{eff}}(\omega) < \frac{d_R}{4}\right\} \leq 2 \ee^{-C\,\sqrt{d_R}}
  \end{equation}
  with a constant $C = \ln(2)^2/(72\,\pi^3)$.
\end{theorem}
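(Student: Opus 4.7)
The two claims are of different character: (i) is a first-moment statement, which I would prove by Jensen's inequality, while (ii) is a concentration statement, which I would prove by Levy's measure-concentration lemma on the unit sphere of $\hiH_R$.

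For (i), I would first exploit the non-degenerate energy gaps hypothesis to compute $\omega$. Expanding $\psi_0 = \sum_k c_k \ket{E_k}$ in the eigenbasis of $\haH$, the off-diagonal phases $\ee^{\iu(E_k - E_l)t}$ average to zero for $k \neq l$, so $\omega = \sum_k |c_k|^2 \ketbra{E_k}{E_k}$ and hence $1/d^{\mathrm{eff}}(\omega) = \Tr[\omega^2] = \sum_k |c_k|^4$. Since $x \mapsto 1/x$ is convex, Jensen's inequality yields
\begin{equation*}
\expect{d^{\mathrm{eff}}(\omega)}_{\psi_0} \geq \frac{1}{\expect{\sum_k |c_k|^4}_{\psi_0}} .
\end{equation*}
Writing $q_k = \|P_R \ket{E_k}\|^2$ for the weight of $\ket{E_k}$ in $\hiH_R$, the standard fourth Haar moment on the complex unit sphere of $\hiH_R$ gives $\expect{|c_k|^4}_{\psi_0} = 2 q_k^2/(d_R(d_R+1))$; summing over $k$ and using $\sum_k q_k = \Tr P_R = d_R$ together with $q_k \leq 1$ bounds the denominator by $2/(d_R+1)$, which yields the claim.

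For (ii), I would apply Levy's measure-concentration lemma to the function $g(\psi_0) = \sqrt{\Tr[\omega^2(\psi_0)]} = \|\omega\|_2$ on the unit sphere of $\hiH_R$. Because $\omega$ is a quadratic function of $\psi_0$ and the Hilbert--Schmidt norm is non-expansive, a short calculation bounds the Lipschitz constant of $g$ by a small absolute constant ($\eta \leq 2$ should suffice). Levy's lemma then gives a sub-Gaussian tail estimate for $g$ around its mean. The event $d^{\mathrm{eff}}(\omega) < d_R/4$ becomes $g > 2/\sqrt{d_R}$, and combined with $\overline{g} \leq \sqrt{\overline{g^2}} \leq \sqrt{2/(d_R+1)}$ inherited from (i) this turns into a one-sided deviation event. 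Carefully tracking the numerical constants from Levy's inequality (which carries the familiar $1/(9\pi^3)$) should reproduce the stated $C = (\ln 2)^2/(72\pi^3)$.

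The hard part is (ii): because the threshold $d_R/4$ lies only a constant factor below the mean lower bound $d_R/2$, the required deviation of $g$ from its mean is of the same order $\Theta(1/\sqrt{d_R})$ as $\overline{g}$ itself, not a small perturbation on top of it. A naive application of Levy's inequality with a constant Lipschitz bound then produces only an $O(1)$ exponent rather than the $\sqrt{d_R}$ claimed. Obtaining the right scaling therefore requires extra care --- for example a finer Lipschitz bound valid only on a high-probability subset, a median-vs-mean comparison to absorb the $\ln 2$ factor, or a moment-generating-function argument that exploits the polynomial structure of $\Tr[\omega^2]$ as a degree-two function of the squared amplitudes $|c_k|^2$. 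Part (i), by contrast, should be a routine combination of Jensen's inequality with a textbook Haar-moment calculation.
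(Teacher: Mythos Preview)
The paper does not contain a proof of this theorem at all: it is quoted verbatim as a result of Linden, Popescu, Short and Winter \cite{Linden09} and used as input for the subsequent discussion. So there is no ``paper's own proof'' to compare against; your sketch is effectively a reconstruction of the argument in the cited reference rather than of anything in the present article.

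That said, your outline tracks the original proof in \cite{Linden09} quite faithfully. Part~(i) is exactly the Jensen-plus-Haar-moment computation you describe, including the bound $\sum_k q_k^2\le\sum_k q_k=d_R$ with $q_k=\|P_R\ket{E_k}\|^2$. For part~(ii) your diagnosis of the difficulty is also accurate: with a crude global Lipschitz bound $\eta=O(1)$ for $g(\psi_0)=\|\omega\|_2$, Levy's lemma only yields an $O(1)$ exponent, since the relevant deviation scale $\epsilon\sim 1/\sqrt{d_R}$ makes $d_R\epsilon^2/\eta^2=O(1)$. Getting the stated $\ee^{-C\sqrt{d_R}}$ really does require the kind of refinement you mention (effectively exploiting that the gradient of $g$ is itself typically of order $1/\sqrt{d_R}$ rather than $O(1)$), and the specific constant $C=(\ln 2)^2/(72\pi^3)$ is an artefact of how that refinement is carried out in \cite{Linden09}. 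Your proposal identifies the right tools and the right obstacle; it simply cannot be checked against this paper because the proof lives entirely in the reference.
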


The Haar measure used in the above theorem is sometimes criticized for being unphysical. 
As the results presented herein depend crucially on $d^{\mathrm{eff}}$ being large it is therefore worth saying a few words about why we believe in their physical significance despite the criticism concerning the Haar measure:
first theorem~\ref{theorem:highaveragedeffectivedimensionisgeneric} is a measure theoretic result, it is not to be misunderstood as statement about states drawn from an actual physical ensemble.
The bound on the probability to get a state with a low effective dimension drops off exponentially.
This raises the hope that the result does not depend on the details of the measure from which the states are sampled and that similar statements can be proven for other non-singular measures.
Second, theorem~\ref{theorem:highaveragedeffectivedimensionisgeneric} is a very strong statement and what is actually needed in the following is much weaker, namely that $d^{\text{eff}}(\omega)$ is much larger than some low, fixed power of $d_S$, which can be as low as 4 or 8 for a single Qubit.
It seems to be unreasonable to assume that the quantum state of a macroscopic object is composed of only that few energy eigenstates.

\section{Speed of fluctuations around equilibrium}
\label{sec:speedoffluctuationsaroudequilibrium}
Knowing that, under suitable conditions, subsystems of large quantum mechanical systems will equilibrate, it is natural to ask: how fast will the fluctuations around the equilibrium state typically be? 
This question was investigated very recently in \cite{0907.1267v1}.

The first step is to introduce a meaningful notion of \emph{speed}.
This is achieved by defining the time derivative \cite{0907.1267v1}
\begin{equation}
  v_S(t) = \lim_{\delta t \to 0} \frac{\tracedistance(\rho^S_t,\rho^S_{t+\delta t})}{\delta t} = \frac{1}{2} \left\| \frac{d\rho^S_t}{dt} \right\|_1 ,
\end{equation}
with 
\begin{equation}
  \frac{d\rho^S_t}{dt} = \iu\,\Tr_B[\rho_t,\haH] .
\end{equation}
As the choice of the origin of the energy scale does not influence the speed, it is convenient to split up the Hamiltonian $\haH$ of the system in a part $\haH_0$ proportional to the identity and the traceless operators $\haH_S$, $\haH_B$ and $\haH_{SB}$ as follows:
\begin{equation}
  \label{eq:generalhamiltonianwithconstanttermabsorbed1}
  \haH = \haH_0 + \haH_S \otimes \mathds{1} + \mathds{1} \otimes \haH_B + \haH_{SB}
\end{equation}

Using a result from \cite{Reimann08} it is shown in \cite{0907.1267v1} that:
\begin{theorem}[\cite{0907.1267v1}]
  \label{theorem:averagespeedisslow}
  For every initial state $\rho_0$ of a composite system evolving under a Hamiltonian of the form \eqref{eq:generalhamiltonianwithconstanttermabsorbed1} and with non-degenerate energy gaps, it holds that:
  \begin{equation}
    \expect{v_S(t)}_t \leq \| \haH_S \otimes \mathds{1} + \haH_{SB}\|_\infty \sqrt{\frac{d_S^3}{d^{\mathrm{eff}}(\omega)}}
    \end{equation}
\end{theorem}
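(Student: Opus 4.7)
The plan is to bound $v_S(t)=\frac{1}{2}\|\dot\rho^S_t\|_1$ by passing to the Hilbert--Schmidt norm and then decomposing the time derivative in an orthonormal basis of local observables, turning the problem into $d_S^2$ copies of a Reimann-style time-averaged fluctuation bound around the stationary state $\omega$. As a preliminary reduction, observe that $\haH_0\propto\mathds{1}$ and $\Tr_B[\rho_t,\mathds{1}\otimes\haH_B]=0$ by cyclicity of the $B$-trace, so only the operator $K:=\haH_S\otimes\mathds{1}+\haH_{SB}$ contributes to $\dot\rho^S_t=\iu\,\Tr_B[\rho_t,K]$. Combining $\|A\|_1\leq\sqrt{d_S}\,\|A\|_2$ for $d_S$-dimensional operators with Jensen's inequality reduces the problem to bounding $\expect{\|\dot\rho^S_t\|_2^2}_t$ with an overall prefactor of $\frac{1}{2}\sqrt{d_S}$.

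Next I would fix an orthonormal Hermitian basis $\{X_a\}_{a=1}^{d_S^2}$ of operators on $\hiH_S$ with $\Tr[X_aX_b]=\delta_{ab}$, so in particular $\|X_a\|_\infty\leq 1$. By Parseval, $\|\dot\rho^S_t\|_2^2=\sum_a|\Tr[X_a\dot\rho^S_t]|^2$, and one application of the commutator identity $\Tr[A[B,C]]=\Tr[[A,B]C]$ turns each coefficient into the expectation value of a global Hermitian observable, $\Tr[X_a\dot\rho^S_t]=\Tr[Y_a\rho_t]$, where $Y_a:=-\iu[X_a\otimes\mathds{1},K]$ satisfies $\|Y_a\|_\infty\leq 2\|K\|_\infty$. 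The key computation is $\Tr[Y_a\omega]=0$: since $[\haH,\omega]=0$ one has $[K,\omega]=-[\mathds{1}\otimes\haH_B,\omega]$, whereupon $\Tr[Y_a\omega]=\iu\,\Tr[[(X_a\otimes\mathds{1}),\mathds{1}\otimes\haH_B]\,\omega]=0$ because $X_a\otimes\mathds{1}$ and $\mathds{1}\otimes\haH_B$ commute. Thus $\Tr[Y_a\rho_t]$ is a pure fluctuation around a vanishing time average, precisely the setting in which Reimann's inequality applies without a correction term.

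Finally, expanding $\rho_t$ in the eigenbasis of $\haH$, performing the time average under the non-degenerate-gaps assumption, and using $|\bra{k}\rho_0\ket{l}|^2\leq\bra{k}\rho_0\ket{k}\bra{l}\rho_0\ket{l}$ (positivity of $\rho_0$) yields $\expect{|\Tr[Y_a\rho_t]|^2}_t\leq\Tr[Y_a\omega Y_a\omega]$; one further line using H\"older or AM--GM on the matrix elements of $\omega$ gives $\Tr[Y_a\omega Y_a\omega]\leq\|Y_a\|_\infty^2\Tr[\omega^2]=\|Y_a\|_\infty^2/d^{\mathrm{eff}}(\omega)$. Summing over the $d_S^2$ basis elements, inserting $\|Y_a\|_\infty\leq 2\|K\|_\infty$, and combining with the preliminary reduction produces exactly the claimed inequality. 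The only step requiring real thought --- and the only one that uses the splitting of $\haH$ in an essential way --- is the identity $\Tr[Y_a\omega]=0$, which is what forces the norm $\|\haH_S\otimes\mathds{1}+\haH_{SB}\|_\infty$ rather than the much larger $\|\haH\|_\infty$ to appear; everything else is bookkeeping with standard norm inequalities and time averaging under non-degenerate gaps.
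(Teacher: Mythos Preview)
Your argument is correct, but note that the paper itself does not prove this theorem: it is quoted as a result from the cited reference (Linden, Popescu, Short, Winter, arXiv:0907.1267), introduced by ``Using a result from [Reimann08] it is shown in [0907.1267v1] that\ldots''. What the paper does display, at the beginning of Section~\ref{sec:equilibrationandeinselection}, is precisely the decomposition you employ: writing $d\rho^S_t/dt=\sum_k c_k(t)\,e_k$ with $\{e_k\}$ a Hermitian orthonormal basis of $\hiH_S$ and $c_k(t)=\Tr\big[\rho_t\,\iu\,[\haH_S\otimes\mathds{1}+\haH_{SB},\,e_k\otimes\mathds{1}]\big]$. Your chain --- pass to the Hilbert--Schmidt norm via $\|\cdot\|_1\le\sqrt{d_S}\,\|\cdot\|_2$, apply Jensen, and then bound each $\expect{|c_k(t)|^2}_t$ by the Reimann fluctuation estimate $\|Y_k\|_\infty^2/d^{\mathrm{eff}}(\omega)$ --- is exactly the argument of the cited reference, and the constants assemble to give the stated bound.

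One small remark: the identity $\Tr[Y_a\omega]=0$ that you single out does not need the detour through $[K,\omega]=-[\mathds{1}\otimes\haH_B,\omega]$. It follows immediately from
\[
\Tr[Y_a\omega]=\expect{\Tr[Y_a\rho_t]}_t=\expect{\Tr\!\big[X_a\,\dot\rho^S_t\big]}_t=\Tr\!\big[X_a\,\expect{\dot\rho^S_t}_t\big]=0,
\]
since $\expect{\dot\rho^S_t}_t=\lim_{\tau\to\infty}(\rho^S_\tau-\rho^S_0)/\tau=0$ for a bounded trajectory. The reduction to $\|\haH_S\otimes\mathds{1}+\haH_{SB}\|_\infty$ rather than $\|\haH\|_\infty$ already occurs at your preliminary step $\dot\rho^S_t=\iu\,\Tr_B[\rho_t,K]$, not at the vanishing of the time average.
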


As we have argued above, we are convinced that the effective dimension $d^{\mathrm{eff}}(\omega)$ is typically very large in realistic thermodynamic systems.
In particular, as all dimensions grow exponentially with the number of constituents of the system it will usually be much larger than any fixed power of $d_S$.
Therefore, the speed of the subsystem will, most of the time, be much smaller than $\| \haH_S \otimes \mathds{1} + \haH_{SB}\|_\infty$, which in turn can be expected to grow at most polynomial with the number of constituents of the subsystem and is the natural unit in which the speed of $\rho^S$ is to be measured \cite{0907.1267v1}.

\section{Einselection in a nutshell}
\label{sec:einselectioninanutshell}
The term \emph{einselection}, which stands for \emph{environment-induced super selection}, is due to Zurek \cite{PhysRevD.26.18,RevModPhys.75.715}.
Einselection is known to occur in situations where the Hamiltonian of the composite system leaves a certain orthonormal basis of the subsystem, spanned by so called \emph{pointer states} $\ket{p}$, invariant \cite{Hornberger09}.
If this is the case, the Hamiltonian and the time evolution operator have the form
\begin{align}
  \label{eq:einselectionhamiltonian}
  \haH &= \sum_p \ketbra{p}{p} \otimes \haH^{(p)} \\
  U_t &= \sum_p \ketbra{p}{p} \otimes U^{(p)}_t ,
\end{align}
where $U^{(p)}_t = \ee^{-\iu\,\haH^{(p)}\,t}$ and the $\haH^{(p)}$ are arbitrary Hermitian matrices.
One finds that the subsystem state of an initial product state of the form $\rho_0 = \rho^S_0 \otimes \psi^B_0$, where the state of the bath can be assumed to be pure without loss of generality, evolves into
\begin{equation}
  \rho^S_t = \sum_{pp'} \ketbra{p}{p}\rho^S_0\ketbra{p'}{p'}\,\bra{\psi^B_0}{U^{(p')}_t}^\dagger\,U^{(p)}_t\ket{\psi^B_0}
\end{equation}
Under the evolution induced by such a Hamiltonian the diagonal entries of $\rho^S_0$, when expressed in the pointer basis, remain unchanged while the off-diagonal entries are suppressed by a factor of $\bra{\psi^B_0}{U^{(p')}_t}^\dagger\,U^{(p)}_t\ket{\psi^B_0} \leq 1$.
The actual time development of the $\bra{\psi^B_0}{U^{(p')}_t}^\dagger\,U^{(p)}_t\ket{\psi^B_0}$ depends on the explicit model under consideration, but for many models they have been found to decrease rapidly over short time scales \cite{Zeh96,Breuer02,Hornberger09,RevModPhys.75.715,PhysRevD.26.18}.
If some of the $\haH^{(p)}$ lead to an identical time development for the chosen initial bath state there exist subspaces of $\hiH_S$ in which coherence is preserved.

Note that, the diagonal entries, which survive the decoherence, are completely determined by $\rho^S_0$ and do not depend on the initial state of the bath $\psi^B_0$ at all.
The direct opposite situation is the thermodynamic case where the final state is completely determined by the properties of the bath.
Most realistic situations surely lie between these two extremes.

\section{Equilibration and einselection}
\label{sec:equilibrationandeinselection}
Using the results discussed in the preceding sections it is possible to get rid of the quite limiting assumptions on the Hamiltonian and to show that einselection, i.e. decoherence with respect to a fixed basis, naturally occurs in situations where the interaction Hamiltonian $\haH_{SB}$ is weak.

According to \cite{0907.1267v1} the velocity can be written as
\begin{equation}
  \frac{d\rho^S_t}{dt} = \sum_{k=1}^{d_S^2} c_k(t)\,e_k
\end{equation}
where the $d_S^2$ operators $e_k$ form a Hermitian orthonormal basis for $\hiH_S$ such that $\Tr[e_k\,e_l] = \delta_{kl}$ and
\begin{equation}
  c_k(t) = \Tr\big[\rho(t)\,\iu\,[\haH_S \otimes \mathds{1} + \haH_{SB},e_k\otimes \mathds{1}]\big] .
\end{equation}
The velocity depends on $\haH_B$ only implicitly through the trajectory $\rho_t$, but for an arbitrary fixed state $\rho$ the velocity is solely determined by $\haH_S$ and $\haH_{SB}$:
\begin{equation}
  \label{eq:subsystemstatevelocity}
  \frac{d\rho^S}{dt} = \iu\,[\rho^S,\haH_S] + \iu\,\Tr_B[\rho,\haH_{SB}]
\end{equation}
Now if $\haH_{SB}$ is much weaker than $\haH_S$, \eqref{eq:subsystemstatevelocity} is dominated by the first term.
Consequently, the system can only become slow when $[\rho^S,\haH_S]$ is small.
To see when this happens we first establish a general lower bound on the norm of commutators between states and arbitrary Hermitian matrices:
\begin{lemma}
  \label{lemma:lowerboundonnormsofcommuators}
  Let $\rho$ be a normalized state and $A$ a Hermitian observable with eigenvalues $a_k$ and eigenvectors $\ket{a_k}$, then
  \begin{align}
    \label{eq:commutatorbound}
    \| [\rho,A] \|_1 = \| \iu\,[\rho,A] \|_1 &\geq 2\max_{\{(k,l)\}} \sum_{(k,l)} |a_k - a_l|\,|\rho_{kl}|\\
    &\geq 2\max_{kl} |a_k - a_l|\,|\rho_{kl}| .
  \end{align}
  where the maximization is performed over all decompositions of the index set $\{1,\dots,d_S\}$ into non overlapping pairs $(k,l)$ over which the sum is performed and $\rho_{kl} = \bra{a_k}\rho\ket{a_l}$.
\end{lemma}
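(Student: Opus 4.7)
The plan is to work entirely in the eigenbasis $\{\ket{a_k}\}$ of $A$ and exploit the fact that $M := \iu [\rho,A]$ is Hermitian with zero diagonal in that basis. A direct computation gives the matrix elements
\begin{equation}
  M_{kl} = \bra{a_k} \iu [\rho,A] \ket{a_l} = \iu(a_l - a_k)\,\rho_{kl},
\end{equation}
so $M_{kk}=0$ and $|M_{kl}| = |a_k - a_l|\,|\rho_{kl}|$. Since multiplying by $\iu$ is unitary, $\|[\rho,A]\|_1 = \|\iu[\rho,A]\|_1 = \|M\|_1$.

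Next, I would fix any decomposition of the index set into disjoint pairs $\{(k_j,l_j)\}$ (with remaining indices, if any, treated as singletons), and form the rank-2 projectors $P_j = \ketbra{a_{k_j}}{a_{k_j}} + \ketbra{a_{l_j}}{a_{l_j}}$ together with rank-1 projectors $Q_m = \ketbra{a_m}{a_m}$ for unpaired $m$. These sum to the identity, so the pinching
\begin{equation}
  \mathcal{P}(M) = \sum_j P_j M P_j + \sum_m Q_m M Q_m
\end{equation}
is well defined. Because $M$ has vanishing diagonal, the $Q_m M Q_m$ terms are zero, so $\mathcal{P}(M)$ is block-diagonal with $2\times 2$ blocks indexed by the chosen pairs.

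Now I invoke the standard fact that pinching is trace-norm non-increasing, $\|\mathcal{P}(M)\|_1 \leq \|M\|_1$, which follows either from writing $\mathcal{P}$ as a uniform convex combination of unitary conjugations (phase flips on the $P_j$, say) or from the general contractivity of unital CPTP maps under the trace norm. Each $2\times 2$ block has the form $\bigl(\begin{smallmatrix} 0 & M_{kl} \\ \overline{M_{kl}} & 0 \end{smallmatrix}\bigr)$, whose eigenvalues are $\pm |M_{kl}|$ and whose trace norm is therefore $2|a_k-a_l|\,|\rho_{kl}|$. Summing over the disjoint blocks gives
\begin{equation}
  \|M\|_1 \geq \|\mathcal{P}(M)\|_1 = \sum_{(k,l)} 2\,|a_k - a_l|\,|\rho_{kl}|,
\end{equation}
and maximizing over pairings yields the first claimed bound. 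The second inequality is immediate: a pairing containing only the single pair $(k,l)$ (with everything else left as singletons) is admissible, so the maximum over pairings bounds $2|a_k-a_l|\,|\rho_{kl}|$ for any particular $k,l$.

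The only nontrivial ingredient is the pinching contractivity, and while it is standard I would state it explicitly (or cite it) since it does the real work. An alternative route avoiding pinching would be to construct a unitary $U$ acting as the appropriate phase-adjusted swap $\alpha_j\ketbra{a_{k_j}}{a_{l_j}} + \beta_j\ketbra{a_{l_j}}{a_{k_j}}$ on each pair and arbitrarily (say, as the identity) on the complement; choosing $\alpha_j,\beta_j$ to align phases with $M_{k_j l_j}$ makes $|\Tr(UM)| = \sum_j 2|a_{k_j}-a_{l_j}|\,|\rho_{k_j l_j}|$, and then the variational formula $\|M\|_1 = \max_U |\Tr(UM)|$ closes the argument. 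Either way the essence is the same: $M$ has zero diagonal, so one is free to pair up rows and columns and read off the entries one cares about without paying for the rest.
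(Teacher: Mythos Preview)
Your proof is correct. The paper takes a closely related but technically different route: it invokes the variational identity $\|B\|_1 = 2\max_{\Pi}\Tr[\Pi B]$ valid for traceless Hermitian $B$ (maximum over all projectors, attained on the positive subspace of $B$), and then plugs in the explicit projector $\Pi=\sum_{(k,l)}\ketbra{\pi_{kl}}{\pi_{kl}}$ with $\ket{\pi_{kl}}=\tfrac{1}{\sqrt{2}}(\ket{a_k}+\ee^{\iu\phi_{kl}}\ket{a_l})$ for a chosen pairing, tuning each phase $\phi_{kl}$ so that $\Tr[\Pi_{kl}M]=|a_k-a_l|\,|\rho_{kl}|$. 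Your main argument instead pinches $M$ onto the $2\times 2$ blocks determined by the pairing and uses contractivity of pinching under the trace norm. Both hinge on the same structural fact---$M$ has zero diagonal in the eigenbasis of $A$---but the paper's version is a one-line dual bound once the superposition projectors are written down, while your pinching version makes the block decomposition and the role of the disjoint pairing more transparent and generalizes more obviously. Your alternative unitary-dual route, using $\|M\|_1=\max_U|\Tr(UM)|$ with phase-adjusted swaps on each pair, is essentially the paper's argument in different clothing: the paper's phase-tuned rank-one projectors and your phase-tuned swap unitary extract the same off-diagonal entries.
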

\begin{proof}
  The equality is trivial.
  For all traceless, Hermitian, bounded operators $B$ on some finite dimensional Hilbert space $\hiH$ it holds that $\|B\|_1 = 2\,\max_{\Pi \in \mathcal{P}(\hiH)} \Tr[\Pi\,B]$, 
  where $\mathcal{P}(\hiH)$ is the set of all projectors on $\hiH$ and the maximum is obtained when $\Pi$ is the projector onto the positive subspace of $B$.
  By expanding $\rho$ in the eigenbasis of $A$, using the above equality for $B=[\rho,A]$ and considering all sums of orthogonal rank one projectors $\Pi_{kl}$ of the form
  \begin{equation}
    \Pi_{kl} = \ketbra{\pi_{kl}}{\pi_{kl}} \qquad \ket{\pi_{kl}} = \frac{1}{\sqrt{2}}(\ket{a_k} + \ee^{\iu \phi_{kl}} \ket{a_l}) ,
  \end{equation}
  where $\phi_{kl}$ are phase factors, one easily verifies \eqref{eq:commutatorbound}.
  The second inequality is trivial.
\end{proof}

We can now prove the main result of this paper:
\begin{theorem}
  \label{theorem:slowstatesmustdecohere}
  Consider a physical system evolving under a Hamiltonian of the form given in \eqref{eq:generalhamiltonianwithconstanttermabsorbed1} and with non-degenerate energy gaps.
  All reduced states $\rho^S$ satisfy
  \begin{align}
    \|\haH_{SB}\|_{\infty} + \frac{1}{2}\left\|\frac{d\rho^S}{dt}\right\|_1 &\geq \max_{\{(k,l)\}} \sum_{(k,l)} |E^S_k - E^S_l|\,|\rho^S_{kl}|\\
    &\geq \max_{kl} |E^S_k - E^S_l|\,|\rho^S_{kl}| ,
  \end{align}
  where $\rho^S_{kl} = \bra{E^S_k} \rho^S \ket{E^S_l}$ and $E^S_k$ and $\ket{E^S_k}$ are the eigenvalues and eigenstates of $\haH_S$.
\end{theorem}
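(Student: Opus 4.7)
The plan is to start from equation \eqref{eq:subsystemstatevelocity}, which decomposes the reduced velocity as
\begin{equation*}
\frac{d\rho^S}{dt} = \iu\,[\rho^S,\haH_S] + \iu\,\Tr_B[\rho,\haH_{SB}],
\end{equation*}
and to solve for the commutator $\iu\,[\rho^S,\haH_S]$. Taking trace norms and applying the triangle inequality then gives
\begin{equation*}
\|[\rho^S,\haH_S]\|_1 \leq \left\|\frac{d\rho^S}{dt}\right\|_1 + \|\Tr_B[\rho,\haH_{SB}]\|_1.
\end{equation*}
The lower bound in lemma~\ref{lemma:lowerboundonnormsofcommuators}, applied to $\rho = \rho^S$ and $A = \haH_S$ with eigenvalues $E^S_k$ and eigenvectors $\ket{E^S_k}$, says that the left-hand side is at least $2\max_{\{(k,l)\}} \sum_{(k,l)} |E^S_k - E^S_l|\,|\rho^S_{kl}|$, so dividing by two will yield exactly the bound claimed in the theorem, provided I can show that $\|\Tr_B[\rho,\haH_{SB}]\|_1 \leq 2\,\|\haH_{SB}\|_\infty$.

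The remaining step is therefore to control this interaction term. I would use two standard facts: first, that the partial trace is contractive in trace norm, so $\|\Tr_B X\|_1 \leq \|X\|_1$ for any operator $X$, and second, the operator-norm commutator bound $\|[A,B]\|_1 \leq 2\,\|A\|_1\,\|B\|_\infty$, which follows from the triangle inequality applied to $AB - BA$ together with $\|AB\|_1 \leq \|A\|_1\|B\|_\infty$. Combining these with $\|\rho\|_1 = 1$ yields
\begin{equation*}
\|\Tr_B[\rho,\haH_{SB}]\|_1 \leq \|[\rho,\haH_{SB}]\|_1 \leq 2\,\|\haH_{SB}\|_\infty,
\end{equation*}
which is exactly what is needed. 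Substituting back and dividing by two gives
\begin{equation*}
\|\haH_{SB}\|_\infty + \tfrac12 \left\|\tfrac{d\rho^S}{dt}\right\|_1 \geq \tfrac12 \|[\rho^S,\haH_S]\|_1 \geq \max_{\{(k,l)\}} \sum_{(k,l)} |E^S_k - E^S_l|\,|\rho^S_{kl}|,
\end{equation*}
and the coarser bound with $\max_{kl}$ follows by keeping a single pair. Note that neither the non-degenerate energy gap assumption nor the splitting \eqref{eq:generalhamiltonianwithconstanttermabsorbed1} of $\haH$ are used beyond providing the identity \eqref{eq:subsystemstatevelocity}; in particular, the term $\iu[\rho^S,\haH_0]$ vanishes because $\haH_0$ is proportional to the identity, and $\mathds{1}\otimes\haH_B$ drops out when the partial trace over $B$ is taken of $[\rho,\mathds{1}\otimes\haH_B]$.

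There is really no hard step here: the argument is a one-line application of lemma~\ref{lemma:lowerboundonnormsofcommuators} wrapped in elementary trace-norm estimates. The only place where one has to be mildly careful is the bound on the interaction contribution, where one must be sure to invoke contractivity of the partial trace \emph{before} bounding the commutator, so that the factor in front of $\|\haH_{SB}\|_\infty$ comes out as $2$ rather than something $d_B$-dependent; the operator-norm (rather than trace-norm) bound on $\haH_{SB}$ is exactly what makes this dimension-independent estimate work.
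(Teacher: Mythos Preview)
Your proof is correct and follows essentially the same route as the paper: start from \eqref{eq:subsystemstatevelocity}, bound $\|\Tr_B[\rho,\haH_{SB}]\|_1 \leq \|[\rho,\haH_{SB}]\|_1 \leq 2\|\haH_{SB}\|_\infty$, and invoke lemma~\ref{lemma:lowerboundonnormsofcommuators} for the $\haH_S$ commutator. The only cosmetic differences are that the paper phrases the first step via the reverse triangle inequality (equivalent to your direct triangle inequality after rearranging) and justifies the partial-trace step by citing contractivity under CP trace-non-increasing maps rather than your more elementary $\|\Tr_B X\|_1 \leq \|X\|_1$; your observation that the non-degenerate-gap hypothesis is not actually used in this argument is also correct.
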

\begin{proof}
  Using the inverse triangle inequality and \eqref{eq:subsystemstatevelocity} we see that
  \begin{equation}
    | \| \iu\,[\rho^S,\haH_S] \|_1 - \| \iu\,\Tr_B[\rho,\haH_{SB}] \|_1 | \leq \left\|\frac{d\rho^S}{dt}\right\|_1 .
  \end{equation}
  For $\|d\rho^S/dt\|_1$ to become small the norms of the two commutators must be approximately equal.
  Applying lemma~\ref{lemma:lowerboundonnormsofcommuators} to the norm of the first commutator yields:
  \begin{equation}
    \| \iu\,[\rho^S,\haH_S] \|_1 \geq 2\,\max_{k \neq l} |E^S_k - E^S_l|\,|\rho^S_{kl}|
  \end{equation}
  The norm of the second commutator can be upper bounded, using the well-known fact that the trace norm of traceless, Hermitian matrices is non-increasing under completely positive, Hermitian, trace-non-increasing maps as follows:
  \begin{equation}
    \| \iu\,\Tr_B[\rho,\haH_{SB}] \|_1 \leq \| [\rho,\haH_{SB}] \|_1 \leq 2\,\|\haH_{SB}\|_{\infty}
  \end{equation}
\end{proof}

The assertion of theorem~\ref{theorem:slowstatesmustdecohere} is almost intuitively clear, but combined with theorem~\ref{theorem:averagespeedisslow} it allows to draw the following important conclusion:
Whenever $d^{\mathrm{eff}}(\omega)$ is large the subsystem is slow most of the time and if this is the case coherent superpositions of eigenstates of $\haH_S$ with eigenvalue differences that are much larger than $\|\haH_{SB}\|_\infty$ can not contribute significantly to the state of the subsystem.
That is, the corresponding off-diagonal elements of the reduced state $\rho^S_t$ in the $\haH_S$ eigenbasis must be small.
A similar behavior was observed for a specific model in \cite{PhysRevLett.82,Wang08}.

Without using any approximations we have shown that coherence can only be retained between eigenstates of $\haH_S$ whose energy difference is small compared to $\|\haH_{SB}\|_\infty$.
This statement remains meaningful even when the subsystem is large and its energy spectrum thus very dense.
Theorem~\ref{theorem:slowstatesmustdecohere} then still implies that coherent superpositions of eigenstates with far apart energies (sometimes called Sch\"{o}dinger cat states) must decohere.
If the subsystem is small and the interaction Hamiltonian weak compared to the energy gaps of the subsystem Hamiltonian it implies an even stronger statement.
The state of the subsystem must then, most of the time, be approximately diagonal in the eigenbasis of $\haH_S$.

\section{Conclusions}
\label{sec:conclusions}
We have shown that quantum systems which interact weakly with the environment tend to evolve into convex combinations of energy eigenstates of their Hamiltonian.
This result is obtained without making any approximations and without neglecting memory effects in the bath.
No assumptions on the details of the interaction are made, as opposed to the classical einselection mechanism due to Zurek.
In particular, we do not need to assume that the interaction with the environment leaves a certain set of pure pointer states invariant.
This proves that decoherence with respect to a fixed basis is a very natural property of weakly interacting quantum systems.
Due to the generality of our approach we cannot say much about the time scales on which decoherence happens.
For this, specific models must be considered \cite{Wang08,Cramer09}.
Our result establishes a link between decoherence theory and the recent research on equilibration and the foundations of Statistical Mechanics \cite{Linden09,0907.1267v1}.

Decoherence in the energy eigenbasis is observed in many situations where the local Hamiltonian is much stronger than the interaction.
A well-known example is electronic excitations of gases at moderate temperature.
The energy gaps between the ground state and the first few excited states are typically much larger than the thermal energy.
The dynamics of such systems is successfully described using transition rates between energy eigenstates.
Ultimately theorem~\ref{theorem:slowstatesmustdecohere} explains why this is eligible.

\section{Acknowledgments}
\label{sec:acknowledgments}
The author would like to thank Andreas Winter for introducing him to the subject and the ongoing support, and Haye Hinrichsen, Peter Janotta, and Alexander Streltsov for the discussions and useful comments.

% arXiv comment:
% published in PRE, revised introduction and conclusions, references revised and new references added, editorial changes

%%%% Bibliography%%%%%%%%%%%%%%%%%%%%%%%%%%%%%%%%%%%%%%%
%\bibliography{bibliography}
%\bibliographystyle{plain}
%\bibliographystyle{utphys}
%\bibliographystyle{h-physrev}
%\bibliographystyle{apsrev}

\end{document}